\documentclass[11pt,a4paper]{article}

\usepackage[T1]{fontenc}
\usepackage[utf8]{inputenc}
\usepackage{lmodern}
\usepackage[margin=1in]{geometry}
\usepackage{setspace}
\usepackage{microtype}
\usepackage{amsmath,amssymb,amsfonts,amsthm,bm,mathtools}
\usepackage{graphicx}
\usepackage{booktabs}
\usepackage{threeparttable}
\usepackage{array}
\usepackage{enumitem}
\usepackage{algorithm}
\usepackage{algpseudocode}
\usepackage{authblk}
\usepackage[dvipsnames]{xcolor}
\usepackage{hyperref}
\usepackage[nameinlink,noabbrev]{cleveref}
\usepackage[numbers]{natbib}

\hypersetup{
  colorlinks=true,
  linkcolor=MidnightBlue,
  citecolor=MidnightBlue,
  urlcolor=MidnightBlue
}

\newtheorem{theorem}{Theorem}

\newtheorem{assumption}{Assumption}
\newtheorem{remark}{Remark}

\DeclareMathOperator{\tr}{tr}

\DeclareMathOperator*{\argmin}{arg\,min}

\newcommand{\Prob}{\mathbb{P}}
\newcommand{\I}{\mathbf{I}}
\newcommand{\one}{\mathbf{1}}
\newcommand{\calN}{\mathcal{N}}
\newcommand{\calW}{\mathcal{W}}
\newcommand{\KL}{\mathrm{KL}}
\newcommand{\J}{\mathrm{J}}
\newcommand{\Bhat}{\mathrm{B}}

\title{\textbf{Bootstrap-Calibrated Spectral Divergence Tests\\
for Online Detection of Covariance Matrix Changes}}

\author[1]{Mehmet S\i dd\i k \c{C}ad\i rc\i}
\author[2,*]{Martin Singull}
\affil[1]{Faculty of Science, Department of Statistics,
          Cumhuriyet University, Sivas, T\"urkiye}
\affil[2]{Department of Mathematics,
          Link\"oping University, Link\"oping, Sweden}
\affil[*]{Corresponding author:
          \href{mailto:martin.singull@liu.se}{\texttt{martin.singull@liu.se}}}
\date{}

\begin{document}
\maketitle

\begin{abstract}
A covariance matrix rarely distorts in a single direction. A shift can expand all
variances simultaneously, shift the variance in one or two principal directions, displace the spectral mass without changing the marginal means, or simply
rotate the dependence structure. The mean-shift detectors that most practitioners use by default are blind to all these effects. At the same
time, we are not aware of any online procedure that calibrates a family of spectral deviation tests with provable false alarm control across both time and tracking window selection; this paper aims to fill that gap. Four spectral
deviations—$D_{\KL}(P_{1}\|P_{0})$, $D_{\KL}(P_{0}\|P_{1})$, Jeffreys, and
Bhattacharyya; each evaluated on the eigenvalues of the empirical relative
covariance operator: $\widehat{\Sigma}_{0}^{-1/2}\widehat{\Sigma}_{t}\widehat{\Sigma}_{0}^{-1/2}$.
The critical values were obtained from a conditional parametric bootstrap method that simultaneously accounts for
the estimation uncertainty in the covariance estimates in both the past and tracking windows;
this is an aspect that asymptotic approaches typically
overlook. The resulting procedure controls the false alarm rate on a family-by-family basis within a predefined monitoring period and a range of candidate window sizes;
when a single operational window is required, a power-based criterion
selects it. Theoretically, we prove the constant alternative consistency under standard conditions
by deriving local spectral expansions that exhibit second-order sensitivity near the null hypothesis. Our simulation results are conservative under the null hypothesis
and, more interestingly, show that the detection power depends largely on the spectral \emph{shape}
rather than just the magnitude of the change:
While $D_{\KL}(P_{1}\|P_{0})$ outperforms under global inflation, Jeffreys and
Bhattacharyya covers a broader range of alternatives. To illustrate the practical importance of
joint covariance tracking, we provide examples using three financial
applications: European stock indices, Fama-French sector portfolios,
and a group of large-cap technology stocks.

\end{abstract}

\noindent\textbf{Keywords:} covariance change point; spectral divergence;
Kullback--Leibler divergence; Jeffreys divergence; Bhattacharyya distance;
parametric bootstrap; online monitoring; multivariate financial returns.

\section{Introduction}

Point-of-change analysis, at its core,
aims to identify structural breaks in a time series before they occur or as soon as they occur. The origin of the sequential formulation
is based on \citet{page1955}, and since then, this field has been
extensively studied by \citet{basseville1993}, \citet{aminikhanghahi2017}, and \citet{truong2020}. A
large portion of this literature, understandably, focuses on the mean. However, in multivariate
environments, a break in the mean is generally the least interesting scenario.
Financial markets, industrial sensors, biological measurements, and network traffic can exhibit sudden changes in volatility, cross-sectional dependence, or
covariance geometry, even while the mean vector remains virtually stationary. Detecting such
changes requires tools based specifically on second-order structure, rather than
relying on mean-shift mechanisms.

Part of the challenge is that “covariance change” is not a single entity.
Global variance inflation, a sudden increase limited to one or two main
directions, a redistribution of spectral mass, and
a complete reversal of the dependence structure—all of these are valid covariance changes, and each
tends to leave a different signature on any test statistic. A
detector tuned to one of these scenarios may completely miss the others.
While the current literature addresses parts of this picture, it
fails to fully piece the whole together: \citet{aue2009} has derived asymptotic covariance stability
tests under standard regularity conditions for multivariate time series;
\citet{avanesov2018} has proposed a multiscale bootstrap procedure for
high-dimensional covariance structures, and \citet{li2023} has developed
online stopping rules for the same problem. Regarding mean drift,
\citet{wang2018changepoint} introduced sparse
projection estimators for high-dimensional changes, and more recently, \citet{bao2026spectral}
proposed a spectral procedure for online covariance tracking
based on linear spectral statistics of sample Fisher matrices. Offline methods, such as the PELT algorithm in
\citet{killick2012}, are effective for retrospective
segmentation but, by their very nature, cannot be run on a live stream. What none of these approaches
offer is a method that is valid simultaneously
over time and in selecting the tracking window. 

One of the natural ways to measure how far apart two covariance matrices have become
in terms of the eigenvalues of their relative covariance operator
is through information-geometric deviations. For Gaussian distributions
, the Kullback–Leibler, Jeffreys, and Bhattacharyya divergences
all reduce to explicit functions of the eigenvalues of
$\Sigma_{0}^{-1/2}\Sigma_{1}\Sigma_{0}^{-1/2}$; this makes them
affine-invariant with virtually no additional effort. The study \citet{cadirci2022entropy}
examined entropy-based divergences for testing multivariate distribution hypotheses
and demonstrated the consistency of kNN-based KL test statistics for generalized Gaussian distributions
under the maximum entropy principle. Closer
to the current framework, in \citet{erdtman2023}’s master’s thesis,
the same divergence measures were identified as promising
candidates, particularly for covariance tracking; however, two practical issues
: how to calibrate operational thresholds and
how to select the size of the monitoring window. It has become evident that both issues
are of great importance in practice.

This article addresses precisely these two questions. It formulates a consistent
family of four affine-invariant divergence statistics for online monitoring
 and then directly tackles the problem: after the reference covariance has been estimated and shrinkage applied, there is no computable closed-form expression for the joint null hypothesis distribution of these nonlinear log-determinant and trace functionals. Here is the approach we have outlined: We use the conditional parametric bootstrap method. This method draws samples from the adapted null model. In this resampling step, we do not spread the uncertainty across both the old and new windows. Instead, we keep one side fixed. We also address how to select the window length. If you want to track multiple windows over an observation period, you can use a Bonferroni-style rule. This keeps the family-wise false alarm rate under control throughout the observation period. Another option is available. If the  plan is to use a single window that will actually be active, we can use a power-based rule. This rule selects the window based on power. To support this approach, we use local spectral expansions around the null hypothesis. We then present an expression that remains valid when the alternative is fixed. In our simulation, we test four different deviations. We find that they respond differently to changes in covariance. Each exhibits a unique spectral pattern. Finally, we apply the method to three financial problems. The goal here is to demonstrate that the method works not only in theory but also in real-world applications.

The remainder of the article has been organized as follows.
Section~\ref{sec:model} defines the tracking model and the four spectral
deviation statistics. Section~\ref{sec:bootstrap}, in turn, addresses bootstrap
calibration, the false alarm distribution, the window selection criterion, and
consistency.  Sections~\ref{sec:simulation} and~\ref{sec:realdata} present, respectively,
the simulation study and real-data applications.
Section~\ref{sec:discussion}discusses limitations and extensions, and
the Section~\ref{sec:conclusion} presents the results.

\section{Model and Spectral Divergence Statistics}
\label{sec:model}

\subsection{Monitoring model}

Let $\{X_{t}\}_{t\geq 1}$ be a sequence of independent $p$-dimensional
random vectors.  A historical reference sample of length $n_{0}$ is observed
from an in-control regime.  The monitoring model is
\begin{equation}
X_{t} \sim
\begin{cases}
  \calN_{p}(\mu,\Sigma_{0}), & t < \tau,\\
  \calN_{p}(\mu,\Sigma_{1}), & t \geq \tau,
\end{cases}
\label{eq:model}
\end{equation}
where $\tau$ is an unknown change time and the mean vector $\mu$ is constant
throughout.  In empirical applications the series are centred before covariance
estimation, isolating second-order structure from location.

At monitoring time $t$, let $\calW_{t}(k)=\{X_{t},\ldots,X_{t+k-1}\}$ denote
a window of size $k$.  Write $\widehat{\Sigma}_{0}$ for the covariance estimator
from the historical sample and $\widehat{\Sigma}_{t}(k)$ for the covariance
estimator from $\calW_{t}(k)$.  The empirical relative covariance matrix is
\begin{equation}
\widehat{R}_{t}(k)
= \widehat{\Sigma}_{0}^{-1/2}\,\widehat{\Sigma}_{t}(k)\,\widehat{\Sigma}_{0}^{-1/2}.
\label{eq:Rhat}
\end{equation}
The window-level testing problem is
\begin{equation}
H_{0}: \Sigma_{1}=\Sigma_{0}
\qquad\text{against}\qquad
H_{1}: \Sigma_{1}\neq\Sigma_{0},
\label{eq:hypothesis}
\end{equation}
equivalently $H_{0}:\Psi=\I_{p}$ against $H_{1}:\Psi\neq\I_{p}$, where
\begin{equation}
\Psi = \Sigma_{0}^{-1/2}\Sigma_{1}\Sigma_{0}^{-1/2}
\label{eq:relative-pop}
\end{equation}
is the population relative covariance operator.

\subsection{Spectral divergences for Gaussian distributions}

Let $P_{0}=\calN_{p}(0,\Sigma_{0})$ and $P_{1}=\calN_{p}(0,\Sigma_{1})$, and
let $\lambda_{1},\ldots,\lambda_{p}$ denote the eigenvalues of $\Psi$.  Under
the Gaussian assumption, four divergence measures admit closed spectral forms.
The Kullback--Leibler divergence from $P_{1}$ to $P_{0}$ is
\begin{equation}
D_{\KL}(P_{1}\|P_{0})
=
\tfrac{1}{2}\!\left\{\tr(\Psi)-\log|\Psi|-p\right\}
=
\tfrac{1}{2}\sum_{j=1}^{p}\!\left(\lambda_{j}-\log\lambda_{j}-1\right).
\label{eq:KL10}
\end{equation}
The opposite direction is
\begin{equation}
D_{\KL}(P_{0}\|P_{1})
=
\tfrac{1}{2}\!\left\{\tr(\Psi^{-1})+\log|\Psi|-p\right\}
=
\tfrac{1}{2}\sum_{j=1}^{p}\!\left(\lambda_{j}^{-1}+\log\lambda_{j}-1\right).
\label{eq:KL01}
\end{equation}
The Jeffreys divergence, the symmetrised sum, is
\begin{equation}
D_{\J}(\Psi)
= D_{\KL}(P_{1}\|P_{0}) + D_{\KL}(P_{0}\|P_{1})
= \tfrac{1}{2}\sum_{j=1}^{p}\!\left(\lambda_{j}+\lambda_{j}^{-1}-2\right).
\label{eq:Jeffreys}
\end{equation}
The Bhattacharyya distance is
\begin{equation}
D_{\Bhat}(\Psi)
=
\tfrac{1}{2}\log\!\left(\frac{|(\I_{p}+\Psi)/2|}{|\Psi|^{1/2}}\right)
=
\tfrac{1}{2}\sum_{j=1}^{p}\log\!\left(\frac{1+\lambda_{j}}{2\sqrt{\lambda_{j}}}\right).
\label{eq:Bhatt}
\end{equation}
All four quantities are non-negative and equal zero if and only if $\Psi=\I_{p}$,
i.e., $\Sigma_{0}=\Sigma_{1}$.  Because they depend only on the generalised
eigenvalues of the pair $(\Sigma_{0},\Sigma_{1})$, they are invariant under
nonsingular affine transformations of the data.

\begin{remark}
Two-way Kullback–Leibler divergences weight the expansion
and asymmetric contraction of the covariance matrix. For $\lambda_{j}=\gamma>1$ (global inflation),
$D_{\KL}(P_{1}\|P_{0})$ exceeds $D_{\KL}(P_{0}\|P_{1})$; since
$(\gamma-\log\gamma-1)>(\gamma^{-1}+\log\gamma-1)$.
This asymmetry has direct effects on the tracking power
(Section~\ref{sec:simulation}).
\end{remark}

\subsection{Sample statistics}

Let $\widehat{\lambda}_{1,t},\ldots,\widehat{\lambda}_{p,t}$ be the eigenvalues
of $\widehat{R}_{t}(k)$.  The four window-level monitoring statistics are
\begin{align}
T_{\KL,t}^{1\to 0}(k) &= k\,D_{\KL}(P_{1}\|P_{0})\!\left(\widehat{R}_{t}(k)\right),
\label{eq:Tklr}\\
T_{\KL,t}^{0\to 1}(k) &= k\,D_{\KL}(P_{0}\|P_{1})\!\left(\widehat{R}_{t}(k)\right),
\label{eq:Tklf}\\
T_{\J,t}(k) &= k\,D_{\J}\!\left(\widehat{R}_{t}(k)\right),
\label{eq:Tj}\\
T_{\Bhat,t}(k) &= k\,D_{\Bhat}\!\left(\widehat{R}_{t}(k)\right).
\label{eq:Tb}
\end{align}
The factor $k$ stabilises the null scale in a fixed dimension and ensures that
the statistics diverge under any fixed alternative as the window size grows.

When $p$ is not negligible relative to $n_{0}$ or $k$, the sample covariance
matrices may be ill-conditioned.  We allow a shrinkage estimator of the form
\begin{equation}
\widehat{\Sigma}^{(\rho)}
= (1-\rho)\,\widehat{\Sigma}
  + \rho\,\frac{\tr(\widehat{\Sigma})}{p}\,\I_{p},
\qquad 0\leq\rho<1,
\label{eq:shrinkage}
\end{equation}
in the spirit of \citet{ledoitwolf2004}.  The same estimation rule is applied
consistently to the historical and monitoring-window estimators, and in every
bootstrap replication.

\subsection{Local spectral expansions} \label{sec:local}

The local behavior of the statistics in the region near the null hypothesis is obtained from the Taylor
expansion. Suppose $\lambda_{j}=1+h_{j}/\sqrt{k}$, where $h_{j}=O(1)$. Then,
as $k\to\infty$,
\begin{align}
k\,D_{\KL}(P_{1}\|P_{0})(\Psi) &= \tfrac{1}{4}\textstyle\sum_{j=1}^{p} h_{j}^{2}+O(k^{-1/2}),
\label{eq:loc-klr}\\
k\,D_{\KL}(P_{0}\|P_{1})(\Psi) &= \tfrac{1}{4}\textstyle\sum_{j=1}^{p} h_{j}^{2}+O(k^{-1/2}),
\label{eq:loc-klf}\\
k\,D_{\J}(\Psi) &= \tfrac{1}{2}\textstyle\sum_{j=1}^{p} h_{j}^{2}+O(k^{-1/2}),
\label{eq:loc-j}\\
k\,D_{\Bhat}(\Psi) &= \tfrac{1}{16}\textstyle\sum_{j=1}^{p} h_{j}^{2}+O(k^{-1/2}).
\label{eq:loc-b}
\end{align}
Of the four statistics, all are locally sensitive to the same quadratic spectral
deviation from $\I_{p}$, but differ in terms of their scale constants.  Specifically,
the Jeffreys statistic is four times as large as each unidirectional KL statistic
in the local regime, and the Bhattacharyya statistic is four times smaller
than the Jeffreys statistic.  These scaling differences
imply that raw statistical values corresponding to different types of deviation
cannot be compared without calibration; the bootstrap threshold is an integral part of the procedure. The derivation
of this is given in Appendix $\ref{app:proofs}$.

\section{Bootstrap Calibration and Online Monitoring}
\label{sec:bootstrap}

\subsection{Conditional parametric bootstrap}

Under $H_{0}$, both $\widehat{\Sigma}_{0}$ and $\widehat{\Sigma}_{t}(k)$
estimate the same matrix $\Sigma_{0}$.  The exact null distribution of the
statistics in \eqref{eq:Tklr}--\eqref{eq:Tb} is analytically unwieldy for
finite $n_{0}$ and $k$, particularly when shrinkage is applied and $\Sigma_{0}$
is unknown.  We therefore calibrate critical values by a conditional parametric
bootstrap.

Fix a divergence $D$ and window size $k$, and write $T_{t}(D;k)=kD(\widehat{R}_{t}(k))$.
Given the historical forecast $\widehat{\Sigma}_{0}$,
for $b = 1, \ldots, B$,
\[
X_{1,b}^{*(0)}, \ldots, X_{n_{0},b}^{*(0)}
\stackrel{\mathrm{i.i.d.}}{\sim}
\calN_{p}(0,\widehat{\Sigma}_{0}),
\qquad
Y_{1,b}^{*},\ldots,Y_{k,b}^{*}
\stackrel{\mathrm{i.i.d.}}{\sim}
\calN_{p}(0,\widehat{\Sigma}_{0}).
\]
From these compute the resampled estimators $\widehat{\Sigma}_{0,b}^{*}$,
$\widehat{\Sigma}_{t,b}^{*}$, and
\[
\widehat{R}_{b}^{*}(k)
= \left(\widehat{\Sigma}_{0,b}^{*}\right)^{-1/2}
  \widehat{\Sigma}_{t,b}^{*}
  \left(\widehat{\Sigma}_{0,b}^{*}\right)^{-1/2}.
\]
The statistics of the bootstrap, $T_{b}^{*}(D;k)=kD(\widehat{R}_{b}^{*}(k))$,
reflects the joint estimation uncertainty of both the
reference and the tracking window estimators. For a local significance level of $\alpha_{k}$,
the bootstrap critical value is as follows:
\begin{equation}
c_{k,\alpha_{k}}^{*}(D)
= \inf\!\left\{x:\frac{1}{B}\sum_{b=1}^{B}\one\bigl\{T_{b}^{*}(D;k)\leq x\bigr\}
  \geq 1-\alpha_{k}\right\},
\label{eq:boot-crit}
\end{equation}
and the alert rule is
\begin{equation}
\delta_{t}(D;k) = \one\!\left\{T_{t}(D;k)>c_{k,\alpha_{k}}^{*}(D)\right\}.
\label{eq:alert}
\end{equation}

\subsection{Family-wise false-alarm control}

Let $\mathcal{K}=\{k_{1},\ldots,k_{m}\}$ be a set of candidate window sizes
and $N_{\max}$ the expected pre-change monitoring horizon.  For each $k$,
set $L_{k}=\lceil N_{\max}/k\rceil$.  Let $\omega_{k}\geq 0$ with
$\sum_{k\in\mathcal{K}}\omega_{k}=1$ be scale weights, and define
\begin{equation}
\alpha_{k} = \frac{\alpha_{\mathrm{FWER}}\,\omega_{k}}{L_{k}}.
\label{eq:alpha-k}
\end{equation}

\begin{theorem}[Family-wise false-alarm control]
\label{thm:fwer}
Suppose the bootstrap critical value in \eqref{eq:boot-crit} satisfies
$\Prob_{H_{0}}(\delta_{t}(D;k)=1)\leq\alpha_{k}$ for each $k\in\mathcal{K}$.
Then
\[
\Prob_{H_{0}}\!\left(
\bigcup_{k\in\mathcal{K}}\,\bigcup_{\ell=1}^{L_{k}}\{\delta_{\ell}(D;k)=1\}
\right) \leq \alpha_{\mathrm{FWER}}.
\]
\end{theorem}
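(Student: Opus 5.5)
The argument is a union bound (Boole's inequality) over the finite collection of window-level tests. The event in the statement is a union over the index set $\{(k,\ell): k\in\mathcal{K},\ 1\le \ell\le L_{k}\}$. This set is finite, with $\sum_{k\in\mathcal{K}}L_{k}$ elements, because $\mathcal{K}$ has $m$ elements and each $L_{k}=\lceil N_{\max}/k\rceil$ is finite. No independence or dependence structure among the $\delta_{\ell}(D;k)$ is needed. This matters, because windows of different sizes overlap and all tests share the same $\widehat{\Sigma}_{0}$, so they are strongly dependent.

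First, I apply subadditivity of $\Prob_{H_{0}}$:
\[
\Prob_{H_{0}}\!\left(\bigcup_{k\in\mathcal{K}}\bigcup_{\ell=1}^{L_{k}}\{\delta_{\ell}(D;k)=1\}\right)
\le \sum_{k\in\mathcal{K}}\sum_{\ell=1}^{L_{k}}\Prob_{H_{0}}\bigl(\delta_{\ell}(D;k)=1\bigr).
\]
Second, I invoke the hypothesis of the theorem, $\Prob_{H_{0}}(\delta_{t}(D;k)=1)\le\alpha_{k}$, for every monitoring index $\ell$. Here I read the hypothesis as holding uniformly over the time index $t$ for each fixed $k$. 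This is natural: under $H_{0}$ the observations are i.i.d. $\calN_{p}(\mu,\Sigma_{0})$, so each window statistic has the same distribution regardless of its position, and the bootstrap critical value $c^{*}_{k,\alpha_k}(D)$ depends only on $k$ and $\widehat{\Sigma}_{0}$. The inner sum is then at most $L_{k}\alpha_{k}$.

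Third, I substitute the definition \eqref{eq:alpha-k}, which gives $L_{k}\alpha_{k}=\alpha_{\mathrm{FWER}}\,\omega_{k}$. Summing over $k\in\mathcal{K}$ and using $\sum_{k}\omega_{k}=1$ yields the bound $\alpha_{\mathrm{FWER}}$.

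The argument has no real obstacle. The only point to state carefully is the interpretation in the second step: the level-$\alpha_{k}$ guarantee must be assumed for each of the $L_{k}$ monitoring positions (for example, non-overlapping windows $t=(\ell-1)k+1$), not just for a single $t$. I would make this explicit, either by noting the stationarity of the window distribution under $H_{0}$ or by reading the assumption as uniform in $t$. I would also remark that if the exact-level property holds only approximately (bootstrap error), the same chain gives $\alpha_{\mathrm{FWER}}$ plus the accumulated approximation error.
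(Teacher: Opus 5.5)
Your proof is correct and follows the paper's argument: Boole's inequality over the finite index set $\{(k,\ell)\}$, the per-window level bound $\alpha_{k}$, and then $\sum_{k}L_{k}\alpha_{k}=\alpha_{\mathrm{FWER}}\sum_{k}\omega_{k}=\alpha_{\mathrm{FWER}}$. Your remark that the level-$\alpha_{k}$ hypothesis must hold at every monitoring position $\ell$, not just at a single $t$, makes explicit a point the paper leaves implicit.
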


\begin{proof}
According to the union,
\[
\Prob_{H_{0}}\!\left(\bigcup_{k,\ell}\{\delta_{\ell}(D;k)=1\}\right)
\leq
\sum_{k\in\mathcal{K}}\sum_{\ell=1}^{L_{k}}
\Prob_{H_{0}}\{\delta_{\ell}(D;k)=1\}
\leq
\sum_{k\in\mathcal{K}} L_{k}\,\alpha_{k}
= \alpha_{\mathrm{FWER}}.
\qedhere
\]
\end{proof}

\subsection{Power-driven window selection}

In applications involving a single operational window, we select the value of $k$ based on a
design power criterion. Let us assume that $\Psi_{\star}$ is a scientifically interesting
reference alternative. For a candidate $k$ value, we estimate the following:
\begin{equation}
\pi_{k}(D;\Psi_{\star})
= \Prob_{\Psi_{\star}}\!\left\{T_{t}(D;k) > c_{k,\alpha_{k}}^{*}(D)\right\}
\label{eq:pi-k}
\end{equation}
Estimate this value using a Monte Carlo simulation based on $\calN_{p}(0,\widehat{\Sigma}_{0}^{1/2}\Psi_{\star}\widehat{\Sigma}_{0}^{1/2})$.
Assuming that the post-shift windows are approximately independent, the number of windows required to detect a signal with at least $\beta$ probability
is given by:
\begin{equation}
m_{k}(\beta,D;\Psi_{\star})
= \left\lceil\frac{\log(1-\beta)}{\log\{1-\pi_{k}(D;\Psi_{\star})\}}\right\rceil,
\label{eq:m-k}
\end{equation}
and the optimal window minimizes the expected detection delay:
\begin{equation}
k_{\mathrm{opt}}(D;\Psi_{\star})
= \argmin_{k\in\mathcal{K}}\; k\,m_{k}(\beta,D;\Psi_{\star}).
\label{eq:kopt}
\end{equation}
This criterion balances the power per window with the delay
caused by a longer window.

\begin{algorithm}[htbp]
\caption{Spectral Deviation Monitoring System Calibrated Using Bootstrap}
\label{alg:detector}
\begin{algorithmic}[1]
\Require Historical sample; tracking stream; window set $\mathcal{K}$;
         deviation $D$; bootstrap size $B$; false alarm budget
         $\alpha_{\mathrm{FWER}}$; tracking horizon $N_{\max}$.
\State Estimate $\widehat{\Sigma}_{0}$ from the historical sample.
\For{for each $k\in\mathcal{K}$}
    \State Compute $L_{k}=\lceil N_{\max}/k\rceil$ and the values of $\alpha_{k}$.
          
    \State Select $B$ bootstrap reference and tracking windows under $\calN_{p}(0,\widehat{\Sigma}_{0})$.
    \State Calculate the value of $c_{k,\alpha_{k}}^{*}(D)$ using \eqref{eq:boot-crit}.
\EndFor
\For{for each monitoring window $\calW_{t}(k)$}
    \State Compute the values of $\widehat{\Sigma}_{t}(k)$, $\widehat{R}_{t}(k)$,
           and $T_{t}(D;k)$.
    \If{$T_{t}(D;k)>c_{k,\alpha_{k}}^{*}(D)$ then, for some $k\in\mathcal{K}$}
        \State Trigger an alarm; record the observation window, the scale, and the statistical value.
    \EndIf
\EndFor
\end{algorithmic}
\end{algorithm}

\subsection{Consistency}

\begin{assumption}
\label{ass:consistency}
The historical and tracking samples are independent of each other. The $p$-dimensional
space is fixed, and both covariance estimators are consistent:
as $n_{0}\to\infty$, $\widehat{\Sigma}_{0}\xrightarrow{P}\Sigma_{0}$ and
as $k\to\infty$, $\widehat{\Sigma}_{t}(k)\xrightarrow{P}\Sigma_{1}$.
\end{assumption}

\begin{theorem}[Fixed-alternative consistency]
\label{thm:consistency}
Let $D\in\{D_{\KL}(P_{1}\|P_{0}),\,D_{\KL}(P_{0}\|P_{1}),\,D_{\J},\,D_{\Bhat}\}$.
Under assumption \ref{ass:consistency}, if $\Sigma_{1}\neq\Sigma_{0}$, then
\[
\frac{1}{k}\,T_{t}(D;k)\xrightarrow{P}D(\Psi)>0
\quad\text{and}\quad
T_{t}(D;k)\xrightarrow{P}\infty.
\]
Furthermore, under the zero-bootstrap
law, if $c_{k,\alpha_{k}}^{*}(D)=O_{P}(1)$, then $\Prob_{\Sigma_{1}}\{T_{t}(D;k)>c_{k,\alpha_{k}}^{*}(D)\}\to 1$.
\end{theorem}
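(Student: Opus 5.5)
The plan is a continuous-mapping argument followed by a comparison of a diverging statistic with a bounded threshold. First, I will show $\widehat{R}_{t}(k)\xrightarrow{P}\Psi$. Under Assumption~\ref{ass:consistency} we have $\widehat{\Sigma}_{0}\xrightarrow{P}\Sigma_{0}$ and $\widehat{\Sigma}_{t}(k)\xrightarrow{P}\Sigma_{1}$. The map $A\mapsto A^{-1/2}$ is continuous on the open cone of positive definite matrices, and $\Sigma_{0}$ lies in this cone. Hence with probability tending to one $\widehat{\Sigma}_{0}$ is positive definite and $\widehat{\Sigma}_{0}^{-1/2}\xrightarrow{P}\Sigma_{0}^{-1/2}$. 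The continuous mapping theorem applied to $(A,B)\mapsto A^{-1/2}BA^{-1/2}$ then gives the claim. The shrinkage version \eqref{eq:shrinkage} is covered as long as the estimators used are consistent, which the assumption already requires; alternatively one takes $\rho=\rho_{k}\to0$.

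Second, I will pass to the divergences. Each of \eqref{eq:KL10}--\eqref{eq:Bhatt} can be written as a trace functional: $\tfrac12\{\tr(R)-\log|R|-p\}$, $\tfrac12\{\tr(R^{-1})+\log|R|-p\}$, their sum, and $\tfrac12\{\log|(\I_p+R)/2|-\tfrac12\log|R|\}$. In this form each is a continuous function of $R$ on the positive definite cone, so no continuity of individual ordered eigenvalues is needed; one only uses that $\tr$, $\det$ and inversion are continuous. Because $\Psi$ is positive definite, the continuous mapping theorem yields $D(\widehat{R}_{t}(k))=\tfrac1k T_{t}(D;k)\xrightarrow{P}D(\Psi)$.

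Third, I need $D(\Psi)>0$ when $\Sigma_{1}\neq\Sigma_{0}$. In that case $\Psi\neq\I_{p}$, so some eigenvalue satisfies $\lambda_{j}\neq1$. Each scalar summand is nonnegative and vanishes only at $\lambda=1$:
\begin{itemize}
\item $x-\log x-1$ by strict convexity;
\item $x^{-1}+\log x-1$ by the same argument applied to $1/x$;
\item $x+x^{-1}-2=(\sqrt{x}-1/\sqrt{x})^{2}$;
\item $\log\{(1+x)/(2\sqrt{x})\}$ by the AM--GM inequality.
\end{itemize}
Hence $D(\Psi)>0$. With $\delta=D(\Psi)/2$, the event $\{T_{t}(D;k)\le M\}$ is contained, once $k>M/\delta$, in $\{D(\widehat R_t(k))<\delta\}$. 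The probability of the latter tends to zero, so $T_{t}\xrightarrow{P}\infty$.

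Finally, for power, fix $\varepsilon>0$ and use $c^{*}_{k,\alpha_k}(D)=O_P(1)$ to choose $M$ with $\Prob(c^{*}>M)<\varepsilon$ for all large $k$. Then
\[
\Prob_{\Sigma_1}\{T_t\le c^{*}\}\le \Prob(c^{*}>M)+\Prob(T_t\le M)<2\varepsilon
\]
eventually, which gives power tending to one. The only delicate point, and the main obstacle, is joint-limit bookkeeping. The statement has both $n_0\to\infty$ and $k\to\infty$, so I will read all limits along a joint sequence $(n_0,k)\to\infty$ and require positive definiteness of $\widehat\Sigma_0$ and $\widehat R_t(k)$ only on events of probability tending to one. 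I will also emphasise that the theorem \emph{assumes}, rather than proves, boundedness of the bootstrap threshold. If $\alpha_k\to0$ through \eqref{eq:alpha-k}, this assumption requires $\alpha_k$ to decay slowly enough, for instance with $N_{\max}/k$ bounded. I will state this as a remark rather than attempt to verify it.
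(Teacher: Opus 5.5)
Your proposal is correct and follows the same route as the paper's proof. That route is: continuous mapping to get $\widehat{R}_{t}(k)\xrightarrow{P}\Psi$, continuity of each divergence on the positive definite cone with $D(\Psi)>0$ for $\Psi\neq\I_{p}$, multiplication by $k$, and comparison with a bounded threshold; you fill in the details the paper leaves implicit (scalar positivity, the explicit $M$--$\varepsilon$ bounds, the joint $(n_{0},k)$ limit). You are right to treat $c^{*}_{k,\alpha_{k}}(D)=O_{P}(1)$ as a hypothesis rather than something holding by construction, and your union-bound step goes through with $M$ replaced by $\delta k$, so it only needs $c^{*}_{k,\alpha_{k}}(D)=o_{P}(k)$, which is the natural condition if $\alpha_{k}$ decays.
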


\begin{proof}
The consistency of covariance estimators and the continuous mapping theorem
yield the result $\widehat{R}_{t}(k)\xrightarrow{P}\Psi$. Each divergence $D$
is continuous in the cone of positive-definite matrices, and since $D(\Psi) > 0$
whenever $\Psi \neq \I_{p}$—and since $D(\Psi) > 0$ when $\Psi \neq \I_{p}$—it follows that $D(\widehat{R}_{t}(k))\xrightarrow{P}D(\Psi) > 0$.
When multiplied by $k$, we obtain the divergence of $T_{t}(D;k)$. The bootstrap
critical value $c_{k,\alpha_{k}}^{*}(D)$, under the null hypothesis,
is by construction $O_{P}(1)$, so the detection probability converges to one.
\end{proof}

\section{Simulation Study}
\label{sec:simulation}

\subsection{Design}

We evaluate the simulation study to assess false alarm control under the null hypothesis and
detection power under various alternatives for covariance changes. Pre-change
observations are drawn from the set $\calN_{p}(\mathbf{0},\I_{p})$.  A representative
setting uses $p=10$ dimensions, a historical sample size of $n_{0}=500$, and
a tracking window of $k=50$; additional window sizes
are discussed in Appendix~\ref{app:simulation}.  Bootstrap calibration employs $B=1{,}000$ repetitions, and the Monte Carlo
power/delay estimates use $500$ and $300$ repetitions, respectively (
the repository’s \texttt{QUICK\_RUN} setting; as noted in the code, performing a final run with larger
number of repetitions would further reduce the Monte Carlo noise in the reported
figures and tables).
All simulations were performed using \textsf{Python} (\texttt{NumPy}/\texttt{SciPy}
for linear algebra and Monte Carlo routines, \texttt{pandas} for
data processing, and \texttt{matplotlib} for figures); the full code and
financial return series are available in the code repository
(see the Code Availability Statement).

We select alternatives to represent qualitatively different spectral changes.
\emph{Global inflation} $\Sigma_{1}=\gamma\I_{p}$ scales all eigenvalues
by a common factor $\gamma\in\{1.25,1.50,1.75,2.00\}$.  A \emph{single-peak}
option concentrates the change in a single primary direction, while a
\emph{double-peak} option causes distortion in two directions.  A \emph{balanced}
option increases the variance in one direction while decreasing it in the other and
preserves the relationship $|\Sigma_{1}|=|\Sigma_{0}|$.  The \emph{intensity-rotated}
alternative redistributes the spectral mass across all directions, which
represents a change in the covariance orientation.  Together, these four scenarios
encompass the qualitative diversity in covariance shifts
that led to this study.

\subsection{Results}

\paragraph{Power comparison.}
Figure~\ref{fig:simulation-power-heatmap} summarizes the power distribution in the first window
across all alternatives, and two patterns immediately stand out. Firstly,
$T_{\KL}^{1\to 0}$ is the strongest statistic under global inflation;
as $\gamma$ increases, its power rises sharply,
while $T_{\KL}^{0\to 1}$ remains distinctly conservative in the same regime precisely
the asymmetry noted in Section~\ref{sec:model}.The second pattern
has to do with symmetric statistics. $T_{\J}$ and $T_{\Bhat}$ start out less aggressive than
$T_{\KL}^{1\to 0}$ under weak inflation, but when the change contains spectral redistribution
rather than uniform scaling as in balanced and strongly rotated
states they outperform the directional KL statistics. Sudden increase
scenarios are challenging for everyone: with just one or two eigenvalues
deviating from unity, each statistic loses its power
relative to more scattered changes.

\begin{figure}[htbp]
    \centering
    \includegraphics[width=0.82\textwidth]{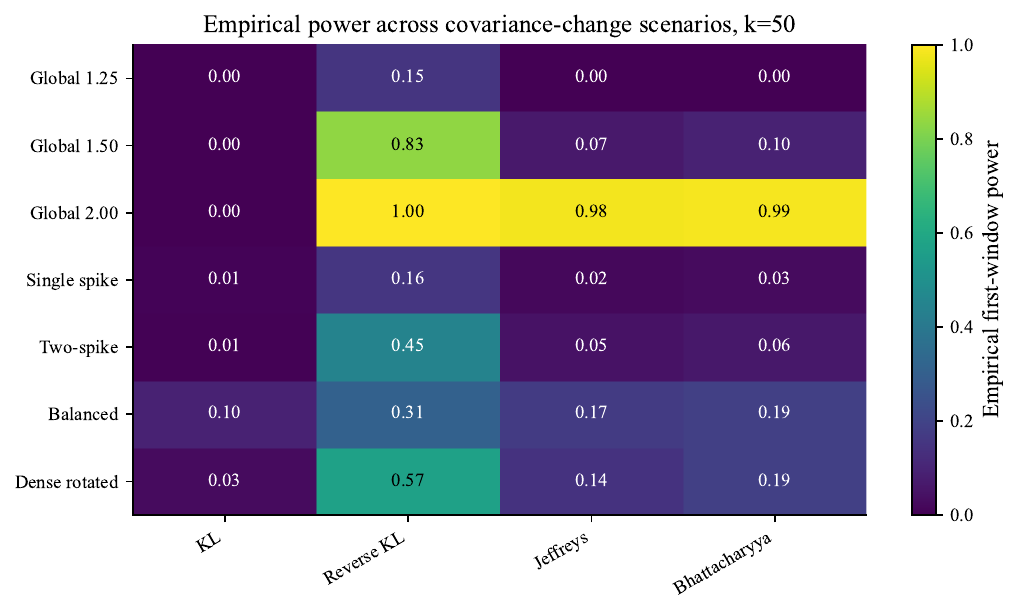}
    \caption{Among the alternatives for covariance exchange, the first-window detection power
             of four spectral deviation statistics calibrated using bootstrap
             ($p=10$, $n_{0}=500$, $k=50$).  Lighter shading
             indicates higher power.}
    \label{fig:simulation-power-heatmap}
\end{figure}

Figure~\ref{fig:simulation-power-curves} depicts how the power
varies continuously as a function of $\gamma$ under global inflation. Precisely around $\gamma=1$,
the rejection rates for all four statistics lie near the nominal level; this provides
reassuring evidence that bootstrap calibration functions
as intended near the null hypothesis. As $\gamma$ increases, $T_{\KL}^{1\to 0}$ stands out from the others and
exceeds the traditional 80
at the lowest inflation level. Jeffreys and Bhattacharyya rise more gradually yet monotonically and
eventually reach nearly perfect power under stronger inflation,
while $T_{\KL}^{0\to 1}$ never truly catches up in this scenario.

\begin{figure}[htbp]
    \centering
    \includegraphics[width=0.78\textwidth]{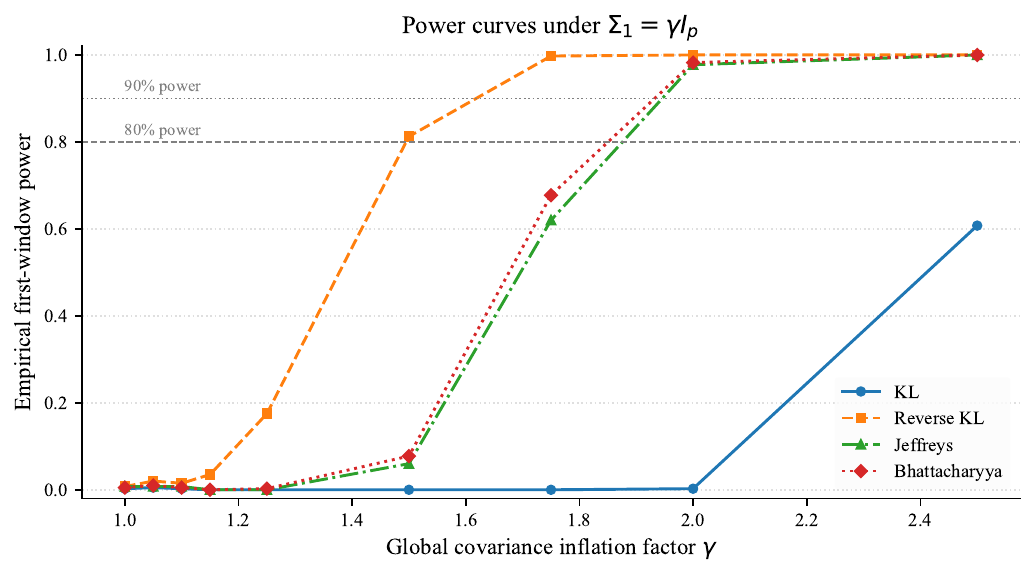}
    \caption{Power curves as a function of $\gamma$ under global covariance inflation $\Sigma_{1}=\gamma\I_{p}$.  The $T_{\KL}^{1\to 0}$ statistic
             reaches high power at the smallest inflation, while $T_{\KL}^{0\to 1}$ is
             significantly less sensitive in this configuration.}
    \label{fig:simulation-power-curves}
\end{figure}

\paragraph{Sequential detection path.}
Figure~\ref{fig:simulation-detection-path} demonstrates a typical
tracking path in the event of a global inflation change. Each statistic is plotted on the graph as a ratio
to its own bootstrap critical value; therefore, the horizontal dashed
line at one unit indicates the alarm threshold. Before the change, all four ratios
are comfortably below one; this is the state we desire so that the false alarm
control can function effectively in practice. When the change occurs, $T_{\KL}^{1\to 0}$
produces the strongest response and triggers several alarms; Jeffreys and Bhattacharyya
also exceed the threshold, but the overshoots are smaller; and $T_{\KL}^{0\to 1}$
never exceeds the threshold in this particular run—which is again
consistent with its weaker performance under inflationary conditions.

\begin{figure}[htbp]
    \centering
    \includegraphics[width=0.82\textwidth]{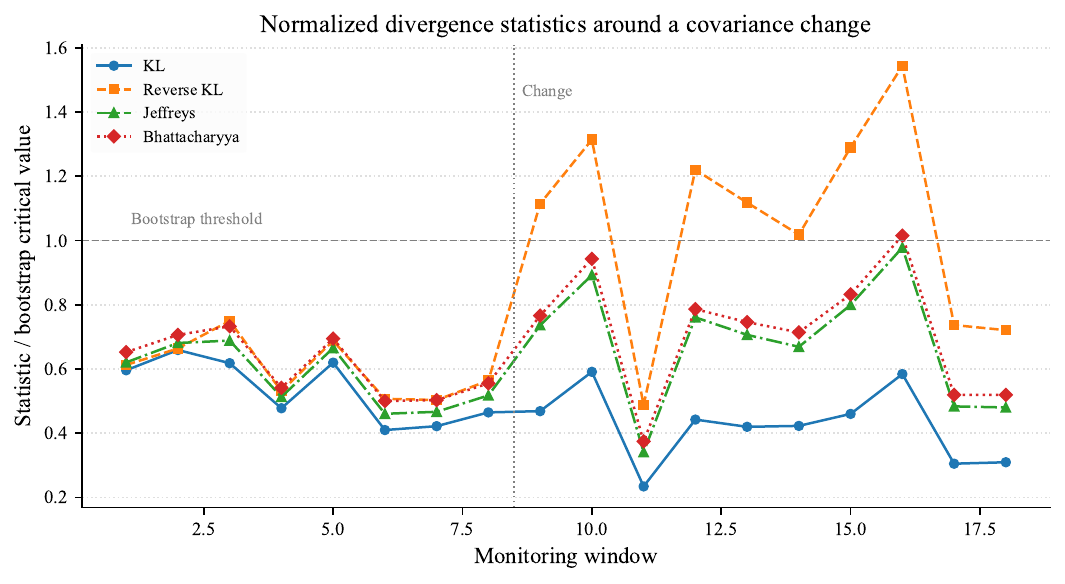}
    \caption{Normalized deviation statistics along a tracking path
             containing simulated covariance changes.  The vertical dotted line
             indicates the point of change; values above the dashed horizontal line
             correspond to an alarm condition.}
    \label{fig:simulation-detection-path}
\end{figure}

\paragraph{Numerical summary.}
Table ~\ref{tab:simulation-summary} provides numerical values for the empirical statistics and
power estimates at $k=50$. Under the null hypothesis, all three statistics shown
remain below the nominal significance level; this is the
conservative behavior we expected. The alternative hypotheses, on the other hand, present a consistent picture: global
inflation at $\gamma=1.50$ leaves essentially no
power for the $T_{\KL}^{0\to 1}$ statistic, while the Jeffreys and Bhattacharyya statistics still yield significant
probabilities of detection. The single-peak alternative hypothesis is truly
challenging for every statistic; for the perturbation occurs only at one eigenvalue and
appears nowhere else. Balanced and heavily rotated cases are
where symmetric statistics shine; the highest detection probabilities and
shortest average delays are observed in these cases; for example, the Bhattacharyya
statistic achieves a detection
probability of $0.980$ and an average delay of only $210.5$ observations in a heavily rotated setting.

\begin{table}[htbp]
\centering
\caption{Summary of simulations for bootstrap-calibrated spectral deviation
         statistics with $p=10$, $n_{0}=500$, and $k=50$.  The size of the null hypothesis
         was evaluated under covariance stability.  ``KL'' denotes $D_{\KL}(P_{0}\|P_{1})$;
         the results obtained for $D_{\KL}(P_{1}\|P_{0})$ under global inflation
         are presented in Figure~\ref{fig:simulation-power-curves}.}
\label{tab:simulation-summary}
\scriptsize
\setlength{\tabcolsep}{4pt}
\begin{tabular}{llrrrr}
\toprule
Scenario & Statistic & Size & Power & Detection probability & Mean delay \\
\midrule
Null          & KL            & 0.002 & --    & --    & --      \\
Null          & Jeffreys      & 0.004 & --    & --    & --      \\
Null          & Bhattacharyya & 0.008 & --    & --    & --      \\
\midrule
Global 1.50   & KL            & --    & 0.000 & 0.007 & 400.000 \\
Global 1.50   & Jeffreys      & --    & 0.104 & 0.833 & 361.000 \\
Global 1.50   & Bhattacharyya & --    & 0.140 & 0.920 & 319.384 \\
\midrule
Single spike  & KL            & --    & 0.018 & 0.187 & 476.786 \\
Single spike  & Jeffreys      & --    & 0.036 & 0.510 & 458.170 \\
Single spike  & Bhattacharyya & --    & 0.042 & 0.563 & 439.941 \\
\midrule
Balanced      & KL            & --    & 0.116 & 0.853 & 341.992 \\
Balanced      & Jeffreys      & --    & 0.198 & 0.977 & 233.788 \\
Balanced      & Bhattacharyya & --    & 0.212 & 0.987 & 213.514 \\
\midrule
Dense rotated & KL            & --    & 0.038 & 0.670 & 444.776 \\
Dense rotated & Jeffreys      & --    & 0.176 & 0.973 & 266.096 \\
Dense rotated & Bhattacharyya & --    & 0.244 & 0.980 & 210.544 \\
\bottomrule
\end{tabular}
\end{table}

When the simulation results are evaluated as a whole, three important conclusions emerge. Bootstrap
calibration performs well in finite samples for every window size
we examined. Detection power is determined more by the spectral \emph{shape} than by
the raw magnitude of the covariance change. Furthermore, no single statistic
is superior in all cases: $D_{\KL}(P_{1}\|P_{0})$ is the appropriate tool
when a change on a global scale is expected, but when the nature of the change is unknown, Jeffreys and Bhattacharyya are safer
default options.

\section{Real-Data Applications}
\label{sec:realdata}

\subsection{Datasets and transformation}

Three financial datasets, selected to
cover different market structures and sizes, were analyzed. The \texttt{EuStockMarkets} dataset
contains the daily closing prices of the DAX, SMI, CAC, and FTSE indices
for the period 1991–1998.  The Fama–French 10-Sector Portfolio, obtained from the Kenneth R. French Data Library,
represents a broad sectoral distribution across
non-durable consumer goods, durable consumer goods, manufacturing, energy, technology, telecommunications,
retail, healthcare, utilities, and other sectors. The \texttt{gafa\_stock} dataset from \texttt{tsibbledata}
contains the daily adjusted closing prices of Apple,
Amazon, Facebook/Meta, and Google/Alphabet from 2014 to 2018.

We convert all price series into continuous compound daily percentage
returns:
\[
r_{t} = 100\log\!\left(\frac{P_{t}}{P_{t-1}}\right).
\]
The returns have then been mean-centered prior to covariance estimation. These three datasets
capture qualitatively distinct dependency structures: cross-country stock
market co-movement (EuStockMarkets), cross-sector factor exposure and rotation
(Fama–French), and firm-level dependence on the technology sector (GAFA).

\subsection{Exploratory evidence}

Before moving on to the figure, it is worth examining
what these three series actually do; for this is what allows us to treat them
as covariance problems rather than mean problems. European index returns
(Figure~\ref{fig:EuStockMarkets_returns}) hover around zero but exhibit distinct
clusters of volatility; within these clusters, extreme movements affecting all four indices simultaneously
are observed—a sign that intermarket dependence does not
remain constant over time. The Fama-French cumulative curves
(Figure~\ref{fig:FF10_cumulative_Q1}) diverge and converge again at different points in the sample,
and this pattern of expansion and contraction
can be more naturally explained by changes in cross-sector covariance
rather than by deviations from the average in any single sector. GAFA cumulative returns
(Figure~\ref{fig:GAFA_cumulative_returns_Q1}) illustrate the same underlying story
from a different angle: several sharp common declines
are observed alongside company-specific deviations; this indicates not merely noise around a fixed target,
but a structure of common interdependence that actually changes over time.

\begin{figure}[htbp]
    \centering
    \includegraphics[width=\textwidth]{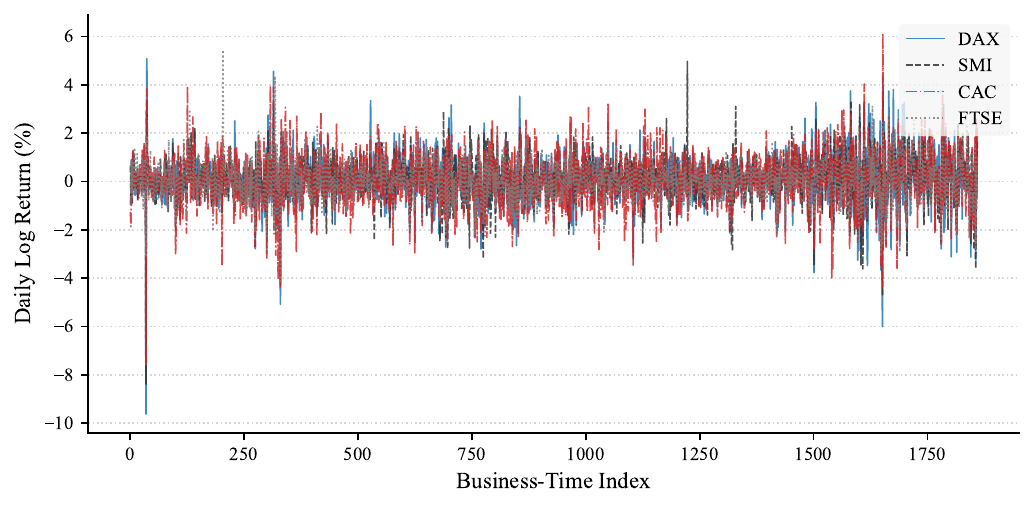}
    \caption{Daily return data for the DAX, SMI, CAC, and FTSE indices.
             The simultaneous increases in volatility across all four series
             require a joint covariance analysis
             rather than separate univariate analyses.}
    \label{fig:EuStockMarkets_returns}
\end{figure}

\begin{figure}[htbp]
    \centering
    \includegraphics[width=\textwidth]{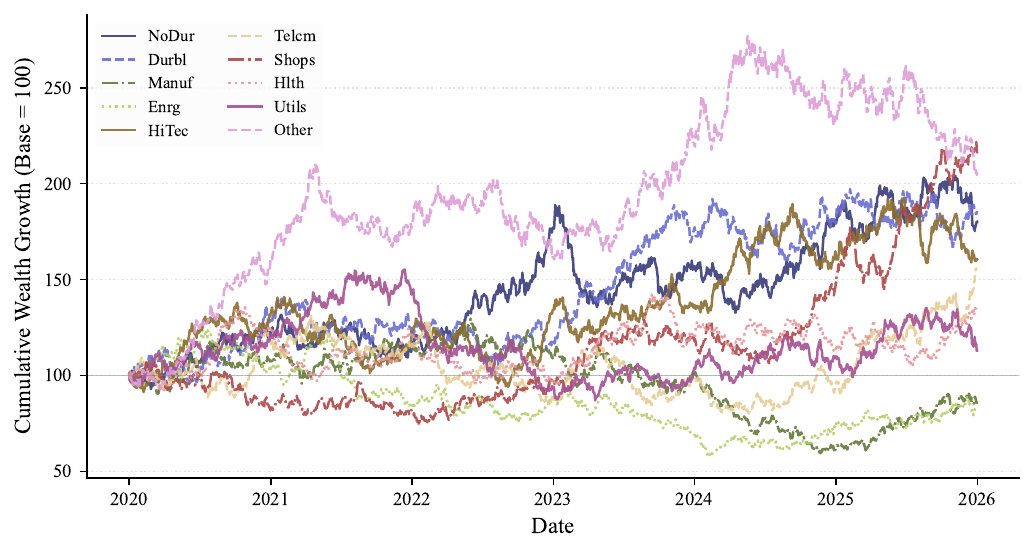}
    \caption{Fama-French 10 Sectors
             Cumulative wealth growth of the portfolios; indexed to 100 at the beginning of the observation period.
             The expanding sectoral distribution over time indicates that the covariance across sectors
             is not constant.}
    \label{fig:FF10_cumulative_Q1}
\end{figure}

\begin{figure}[htbp]
    \centering
    \includegraphics[width=\textwidth]{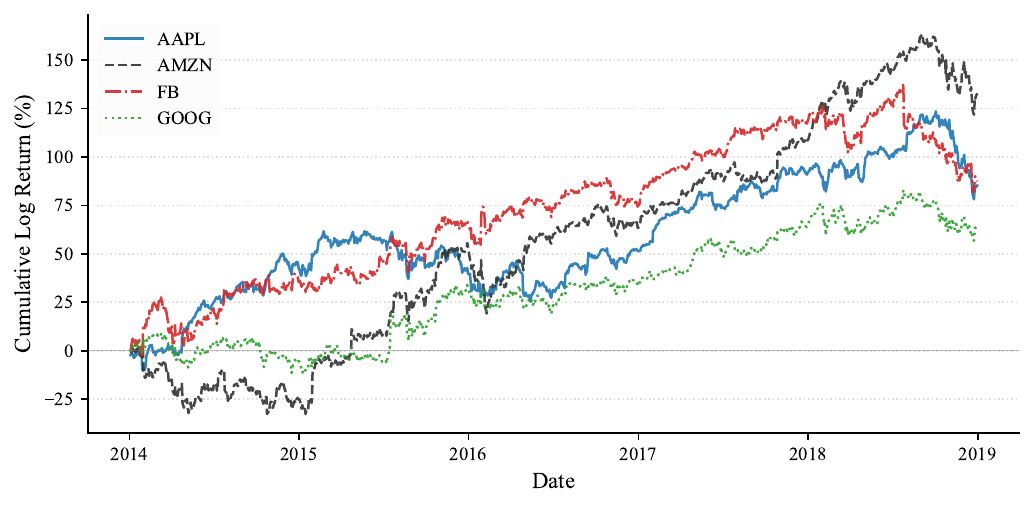}
    \caption{From 2014 to 2018, the cumulative logarithmic returns of AAPL, AMZN, FB, and GOOG stocks.  Common declines and diverging company-specific trends point to time-dependent interdependence with the technology sector.}
    \label{fig:GAFA_cumulative_returns_Q1}
\end{figure}

\subsection{Monitoring results}

We consider the first portion of the sample for each dataset as the historical
reference period and evaluate all subsequent data in $k$-sized
sliding windows. When the covariance is constant, the eigenvalues of $\widehat{R}_{t}$
cluster around one; when the eigenvalues begin to deviate from one, this reflects a change in scale,
correlation, or the direction of the principal components, and this occurs
regardless of whether the means have changed.

Figure~\ref{fig:empirical-monitoring-eustockmarkets} illustrates the monitoring
path for \texttt{EuStockMarkets}, employing the first 250 observations
as the reference period. Each deviation statistic has been plotted on the graph
as a ratio to its own bootstrap critical value; thus, anything above the horizontal line at a given value
is considered a detected change in covariance. There are several notable periods of
distinctly high values that generally coincide with the
increasing cross-market volatility experienced by European stock
markets in the late 1990s. We place greater emphasis on a
period in which several statistics collectively exceed the threshold
rather than on a single statistic exceeding the threshold on its own; for the latter may simply reflect the spectral sensitivity of the deviation in question
rather than a genuine structural break. Table~\ref{tab:real-template} summarizes these three approaches side by side.

\begin{figure}[htbp]
    \centering
    \includegraphics[width=0.82\textwidth]{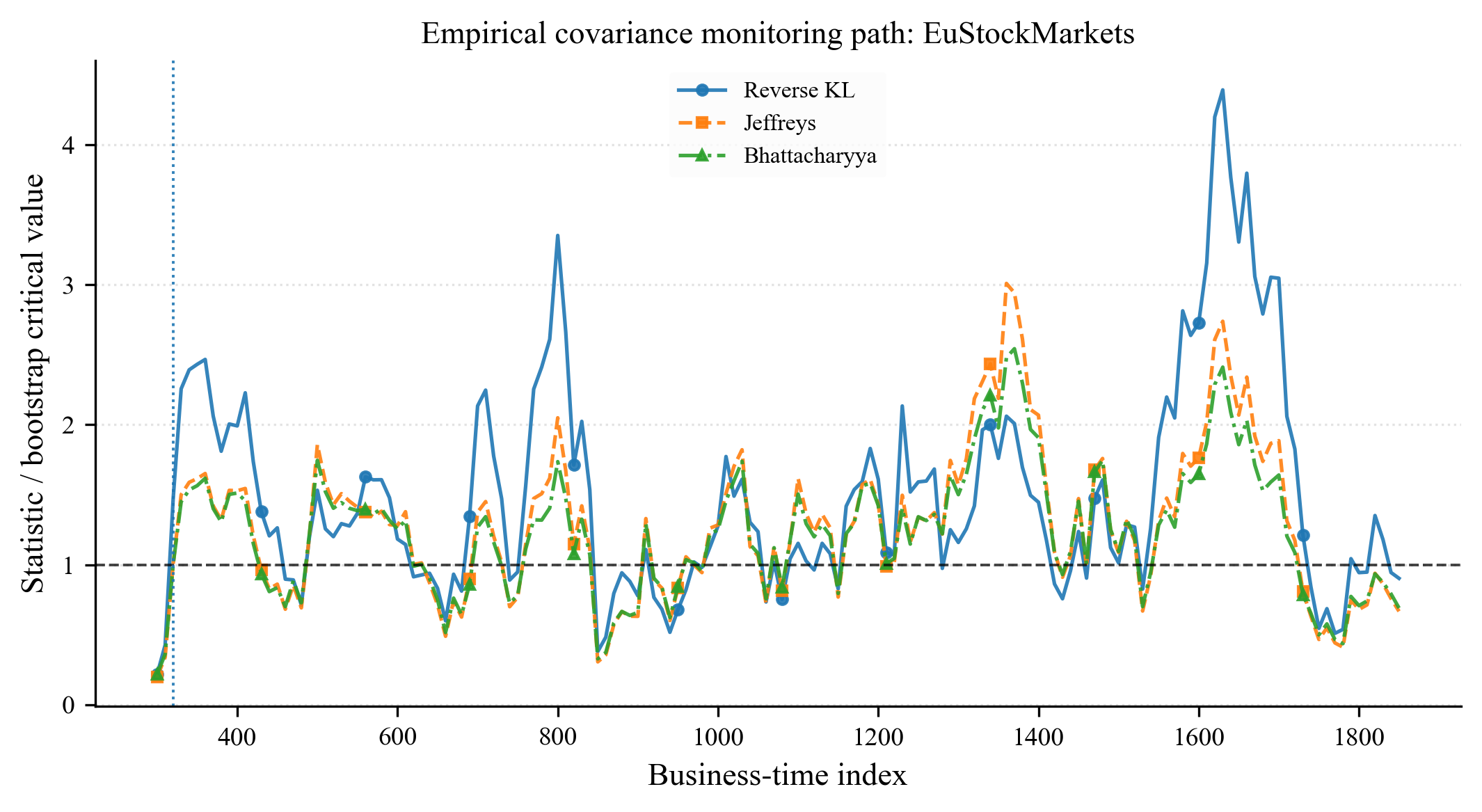}
    \caption{Covariance monitoring chart for the EuStockMarkets return series.
             The statistics have been normalized based on bootstrap critical values;
             values above the dashed line at the unit level
             indicate that a deviation from the historical covariance structure has been detected.}
    \label{fig:empirical-monitoring-eustockmarkets}
\end{figure}

\begin{table}[htbp]
\centering
\caption{Three real-data applications summarized. Each dataset
         is transformed into a multivariate return series; the reference period
         is used to estimate $\widehat{\Sigma}_{0}$, and subsequent windows
         are monitored for changes in covariance.}
\label{tab:real-template}
\small
\setlength{\tabcolsep}{3.5pt}
\begin{tabular}{p{3.0cm}p{3.4cm}p{3.0cm}p{4.6cm}}
\toprule
Dataset & Reference period & Statistics & Empirical role \\
\midrule
\texttt{EuStockMarkets}
  (DAX, SMI, CAC, FTSE)
& First 250 trading days of the 1991--1998 sample
& KL / Jeffreys / Bhattacharyya
& Low-dimensional benchmark; monitoring joint volatility and
  cross-market dependence of major European indices. \\
\midrule
Fama--French 10 Industry Portfolios
& Initial segment of the daily return sample
& KL / Jeffreys / Bhattacharyya
& Moderate-dimensional sectoral application;
  detecting sector-rotation and cross-industry covariance shifts. \\
\midrule
\texttt{gafa\_stock}
  (AAPL, AMZN, FB, GOOG)
& First year of the 2014--2018 sample
& KL / Jeffreys / Bhattacharyya
& Firm-level application; assessing time-varying dependence
  within a closely related technology-stock group. \\
\bottomrule
\end{tabular}
\end{table}

\section{Discussion}
\label{sec:discussion}

If there is one conclusion to be drawn from the simulation study, it would be that no
single deviation statistic dominates under every possible change in covariance.
If prior information points to a global variance expansion, $D_{\KL}(P_{1}\|P_{0})$
is a clear choice; however, it loses much of its advantage in the face of
contractions and spectral redistribution. Jeffreys and Bhattacharyya,
in exchange for covering a broader range of alternatives, sacrifice some power
in the case of pure scale expansion, and in practice this trade-off is generally
worth it: monitoring a small, calibrated family of statistics
is more robust than relying on a single statistic. Furthermore,
as shown in Figure~\ref{fig:empirical-monitoring-eustockmarkets},
there is a diagnostic benefit to reporting all four outcomes together which statistics
were triggered and which were not indicates not only whether a change occurred,
but also the \emph{type} of change that occurred.

It is important to identify the points where theory diverges from the data clearly. The basic assumptions—independent Gaussian observations with a constant mean—do not hold exactly for financial returns, which exhibit
serial correlation, heavy tails, and clustering of structural volatility. Therefore, the bootstrap method should be understood as operating under a calibrated Gaussian
approximation rather than a full data generation process. Properly addressing
serial correlation—perhaps through a stationary block bootstrap or
a version of the relative covariance estimator corrected by HAC—
appears to be the natural next step. In high-dimensional regimes where $p/n_{0}$ is not small,
robust covariance estimators or residuals corrected using a factor model will
likely to be helpful as well. There is also a path toward
completely relaxing the Gaussian working model: as described in \citet{cadirci2026kl}
nonparametric kNN estimator of the Kullback–Leibler divergence in \citet{cadirci2026kl}
shows that a kNN-calibrated threshold can replace the parametric bootstrap and
extend the current framework to the generalized Gaussian family
\citep{cadirci2025tsallis}.

Finally, a caveat from an empirical perspective: Analysis using real data
shows that changes in covariance can be detected in standard financial time series; however, this analysis
cannot link any identified period to a specific
macroeconomic event or policy change. For those wishing to take this study further,
combining the detection method with backwards-looking
localization techniques and providing a realistic economic interpretation would be a natural
next step.

\section{Conclusion}
\label{sec:conclusion}

This article aims to transform spectral deviation statistics into
functional tests for online covariance monitoring, and this framework essentially
achieves this: finite-sample critical values that account for estimation uncertainty in the
that account for estimation uncertainty in the covariance, a Bonferroni allocation that provides
family-wise false alarm control over time and scale, a power-focused rule for selecting
a single window, and theoretical support via local spectral expansions and fixed alternative
consistency. Thus, it fills the gap left by
\citet{erdtman2023}, which identified these deviations as promising but
did not establish a calibration mechanism around them. Simulation results
confirm that bootstrap calibration
remains conservative and stable under the null hypothesis; it is not only the magnitude but also the spectral \emph{shape} of a change
that determines which deviation statistic is most powerful.

The framework is modular enough to accommodate alternative covariance
estimators, dependency-adjusted resampling schemes, or additional spectral
distances without disrupting the underlying calibration structure—which
leaves ample room for further work on online covariance tracking
in multivariate systems.

\section*{Code Availability}

All code used to generate the simulation results, figures, and tables in this article
is publicly available at \url{https://github.com/mehmetsiddik/bootstrap-spectral-divergence}.
This repository contains a Python implementation of spectral divergence statistics calibrated using bootstrap,
the simulation code, real-data applications, and the financial return series used in Section~\ref{sec:realdata}.

\bibliographystyle{plainnat}
\bibliography{Reference}

\appendix
\section{Proof of the Local Spectral Expansions}
\label{app:proofs}
Let’s take $u$ to be a small number and expand it as usual:
\[
\log(1+u) = u - \tfrac{u^{2}}{2} + \tfrac{u^{3}}{3} + O(u^{4}),
\qquad
(1+u)^{-1} = 1 - u + u^{2} - u^{3} + O(u^{4}).
\]
If we substitute $\lambda_{j}=1+u$ into each of the four eigenvalue terms
and combine the powers of $u$, we obtain the following result:
\begin{align*}
\tfrac{1}{2}\bigl\{(1+u)-\log(1+u)-1\bigr\}
  &= \tfrac{u^{2}}{4} + O(u^{3}), \\
\tfrac{1}{2}\bigl\{(1+u)^{-1}+\log(1+u)-1\bigr\}
  &= \tfrac{u^{2}}{4} + O(u^{3}), \\
\tfrac{1}{2}\bigl\{(1+u)+(1+u)^{-1}-2\bigr\}
  &= \tfrac{u^{2}}{2} + O(u^{3}), \\
\tfrac{1}{2}\log\!\left(\frac{2+u}{2\sqrt{1+u}}\right)
  &= \tfrac{u^{2}}{16} + O(u^{3}).
\end{align*}
Now let $u = h_{j}/\sqrt{k}$ and sum over the range $j = 1, \ldots, p$. Multiplying both sides by
$k$ yields \eqref{eq:loc-klr}--\eqref{eq:loc-b}.

\section{Additional Simulation Results}
\label{app:simulation}

Figure~\ref{fig:simulation-bootstrap-calibration} graphically illustrates the empirical false alarm
probabilities for different tracking window sizes under covariance stability conditions,
compared to the nominal levels specified in \eqref{eq:alpha-k}. The curves
closely track the nominal reference value across window sizes and, in most cases,
remain above or below this value; at the largest window size,
a slight overshoot is observed for $T_{\KL}^{1\to 0}$; which is consistent with Monte Carlo noise arising from the number of repetitions used here rather than
a systematic calibration error, and
decreases with the larger numbers of repetitions recommended for the final run.

\begin{figure}[htbp]
    \centering
    \includegraphics[width=0.82\textwidth]{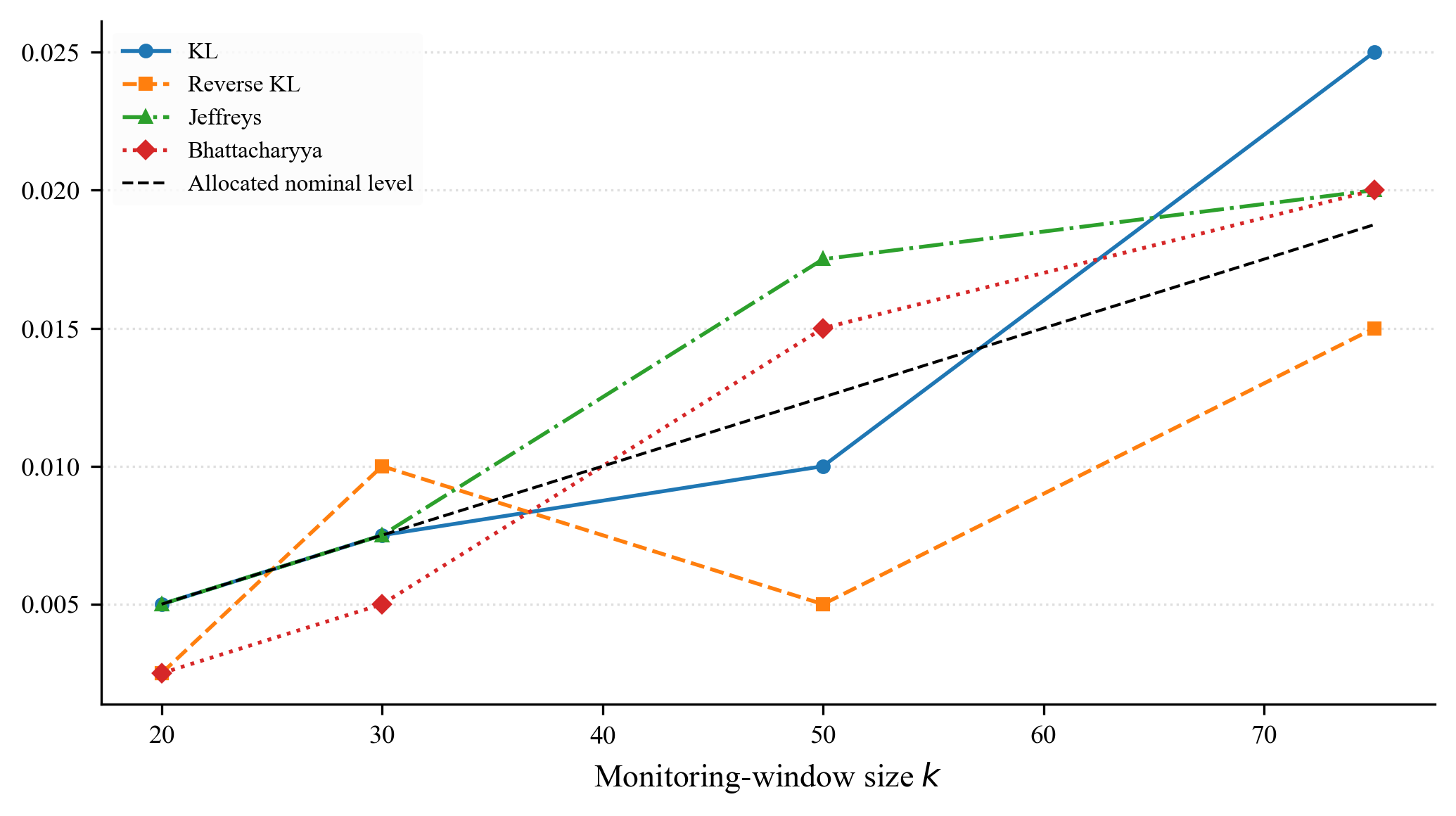}
    \caption{Under covariance stability conditions,
             empirical false alarm probabilities for window sizes $k\in\mathcal{K}$ are compared to the assigned
             nominal levels (dashed lines).  Conservative control was maintained
             throughout the entire process.}
    \label{fig:simulation-bootstrap-calibration}
\end{figure}

Four change scenarios in Figure~\ref{fig:simulation-power-delay} and
Table~\ref{tab:simulation-summary}
show the power-delay tradeoff for a representative window of $k=50$; here, the upper-left region
represents the target point: high detection probability, short delay. While Jeffreys and
Bhattacharyya fall at a more balanced point between balanced and
heavily rotated alternatives, directional KL statistics are more sensitive to the
exact form of spectral skew. The overall picture
shows that detection quality must be evaluated jointly in terms of detection probability
and delay, rather than relying solely on the power of the first window.

\begin{figure}[htbp]
    \centering
    \includegraphics[width=0.78\textwidth]{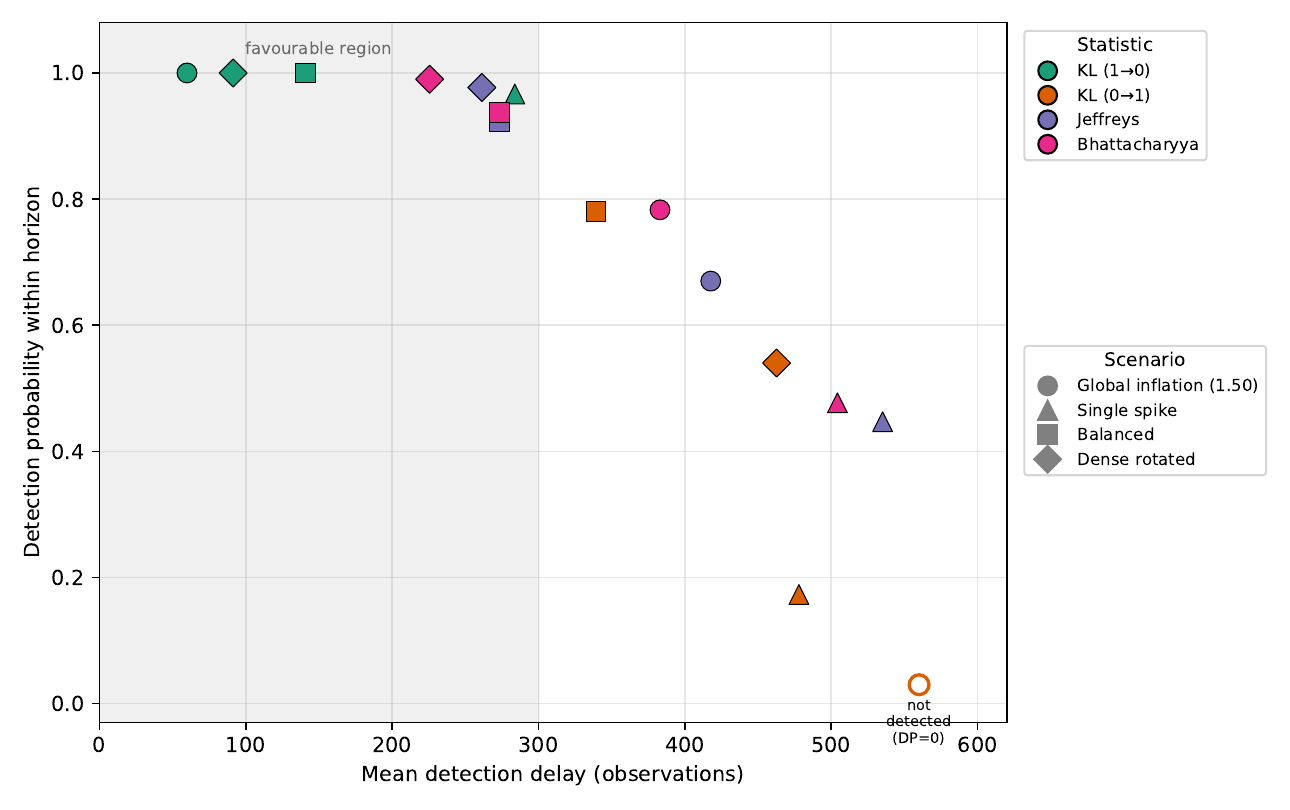}
    \caption{For $k=50$, the power-delay trade-off for four
             spectral deviation statistics calibrated using bootstrap
             is plotted across covariance change scenarios. The points in the upper-left corner
             indicate a high detection probability combined with a short average delay.
             All four series were calculated from the same simulation study
             that forms the basis of Table~\ref{tab:simulation-summary}.}
    \label{fig:simulation-power-delay}
\end{figure}

\end{document}